\newcommand{\ie}{\textit {i.e., }}
\newcommand{\x}{\pmb{A}}
\newcommand{\y}{\pmb{B}}
\newcommand{\dd}{\mathrm{d}}
\newcommand{\hp}[1]{\mathsf{p}_{#1}}
\newcommand{\hpdiff}[1]{\mathsf{q}_{#1}}
\newcommand{\hpone}{\overline{\mathsf{p}}}
\newcommand{\lhp}[1]{\mathcal{P}_{#1}}
\newcommand{\lhpone}{\overline{\mathcal{P}}}
\newcommand{\erfc}{\mathrm{erfc}}
\newtheorem{theorem}{Theorem}
\newtheorem{coro}{Corollary}[theorem]
\newcommand{\laplace}[2]{\mathcal{L}_{#1}\left[#2\right]}
\begin{document}
	\title{Impact of Multiple Fully-Absorbing Receivers in Molecular Communications}
		\author{Nithin V. Sabu, Abhishek K. Gupta, Neeraj Varshney, Anshuman Jindal
		\thanks{ N. V. Sabu, A. K. Gupta and A. Jindal are with IIT Kanpur, %Indian Institute of Technology Kanpur, 
 208016, India (Email:{ \{nithinvs,gkrabhi,anshuji\}@iitk.ac.in}). N. Varshney is with NIST, %National Institute of Standards and Technology, 
Gaithersburg, MD 20899, USA (Email: {neerajv@ieee.org}). 
			This research was supported by the DST-SERB %Science and Engineering
			%Research Board 
(India) under the grant SRG/2019/001459 and IITK under the grant IITK/2017/157. 
		A part of the work %An extended version of this work
		 is submitted  in \cite{sabu2021channel}.}}
	\maketitle

\begin{abstract}
Molecular communication is a promising solution to enable intra-body communications among nanomachines. However, malicious and non-cooperative receivers can degrade the performance, compromising these systems' security. Analyzing the communication and security performance of these systems requires accurate channel models. However, such models are not present in the literature. In this work, we develop an analytical framework to derive the hitting probability of a molecule on a fully absorbing receiver (FAR) in the presence of other FARs, which can be either be cooperative or malicious. We first present an approximate hitting probability expression for the 3-FARs case. A simplified expression is obtained for the case when FARs are symmetrically positioned. Using the derived expressions, we study the impact of malicious receivers on the intended receiver and discuss how to minimize this impact to obtain a secure communication channel. We also study the gain that can be obtained by the cooperation of these FARs. We then present an approach to extend the analysis for a system with $ N $ FARs. The derived expressions can be used to analyze and design multiple input/output and secure molecular communication systems.

	%Molecular communication is a promising solution for enabling intra-body communication systems. Analytical channel models are essential in analyzing and understanding such communication systems. For a molecular communication systems in 3-D with a point transmitter and a fully-absorbing receiver, channel model exists. However, there is no exact channel model for systems with more than one fully-absorbing receiver. Channels with multiple absorbing receivers obtained using fitting models, machine learning-based approaches, and particle-based simulations are time-consuming and inconvenient. In this work, an approximate expression for the hitting probability of a molecule emitted by a point transmitter on each fully absorbing receiver in a system with three receivers is derived. A more straightforward expression for the hitting probability is obtained from the derived expression when the receivers are equidistant from each other and the transmitter. Using the derived expressions, we also obtained several design insights. {\color{blue}We also study the impact of malicious receivers on the intended receivers and discuss how to minimize this impact to obtain a secure communication channel.} We then provide an analytical framework for deriving the channel model for a molecular communication system with $ N $ fully-absorbing receivers. The derived expressions can be used to analyze and design multiple input/output and secure molecular communication systems.
	\end{abstract}

\begin{IEEEkeywords}
	Molecular communication, multiple fully-absorbing receivers, intra-body communications.
\end{IEEEkeywords}

\section{Introduction}
The future sixth generation of communication standards is envisioned to support nanomachine communication \cite{tripathi2021}. For intra-body communication, molecular communication (MC) is a promising solution for reliable communication among bio-nanomachines (nanomachines with biological functional parts)  \cite{nakano2013}. MC uses molecules as the carrier of information, termed information molecules (IMs). The out of the body communication can be effectively performed with the help of electromagnetic (EM) waves like tera-hertz waves\cite{tripathi2021}. A hybrid system utilizing both MC and EM-based communication can help realize efficient healthcare systems. Here, MC systems can gather information from inside the human body, and  EM-based systems convey the gathered information to a healthcare facility to process the information and take necessary actions. 
%In an MC system, after encoding, the IMs stored inside the transmitter bio-nanomachine are released to the propagation medium based on the transmit information. The IMs propagate from the transmitter to the receiver via different propagation mechanisms like diffusion, using molecular motor, etc. The diffusion-based propagation mechanisms are widely studied in the literature due to their energy efficiency and ease in mathematical modeling. The receiver bio-nanomachines capture the propagating IMs and trigger some chemical reactions. The chemical reactions may result in producing other type of molecules, changing the direction of motion etc.\par

In intra-body communication systems, multiple receiver bio-nanomachines may coexist in the same communication channel. For example, multiple malicious receivers may coexist with MC systems inside a human body. The malicious receivers can act as eavesdroppers or hinder the triggering of MC detection systems by absorbing the IMs \cite{eved}. Such actions have profound security implications in e-health systems. Multiple receivers can also be utilized to create single-input multiple-output (SIMO) and multiple-input multiple-output (MIMO) communication links with improved reliability and performance. 
%Accurate characterization of molecular channels in the presence of multiple receivers is crucial in analyzing the security and performance of these MC systems. 

An important metric in characterizing a molecular channel is hitting probability which denotes the probability that an IM hits the intended receiver (or equivalently the fraction of IMs hitting the receiver) within time $t$.
For a diffusion-based MC system with a point transmitter and a fully-absorbing receiver (FAR) in three-dimension (3-D), the hitting probability was discussed in \cite{yilmaz2014}. %,schulten2000lectures
When there are other FARs present in the same medium, they can capture the IMs transmitted to the intended receiver. %It decreases the hitting probability degrading the communication reliability. %decreases due to increased competition to capture the emitted IM by the additional FAR. 
Therefore, the hitting probability decreases with the number of additional receivers, thus degrading the communication reliability. It is important to derive the hitting probability for general $N$ FARs to study the extent of the impact of additional receivers.
However, there is no analytical channel model for an MC system with multiple FARs. Many of the works \cite{koo2016, damrath2018,lee2017} relied on fitting models, machine learning-based approaches, and particle-based simulations to analyze systems with multiple FARs. For a system with two FARs, an approximate analytical expression for the hitting probability on \textit{each of} the FAR was derived in our past work \cite{sabu2020a}. Later this work was extended to derive the approximate expression for hitting probability of an IM on FARs with different sizes \cite{sabu2021ab}. An approximate analytical equation for the hitting probability of an IM on \textit{any one} of the FAR in a system with multiple FARs was derived in \cite{sabu2020}. However, analysis for an arbitrary value of $N$ is still missing in the current literature. 
 Analyzing MC systems with multiple FARs using the methods employed in  \cite{koo2016, damrath2018,lee2017} are time-consuming and inconvenient. Even though obtaining exact analytical expressions for the hitting probability of MC systems with multiple FARs can be difficult, approximate analytical expressions with good accuracy are of a greater value, which is the focus of the work.

This work considers a diffusion-based MC system in 3-D with a point transmitter and $N$ FARs. We first consider $N=3$ and derive  the hitting probability of an IM on one FAR in presence of the rest of the FARs, which is validated using simulations. We then provide the approach for the general $N$ case. %The derived equations are then simplified under a special case, and several design insights are discussed. 
The contributions of this work are listed below.
\begin{itemize}[leftmargin=10pt]
	\item We consider an MC system with three FARs located in arbitrary positions in 3-D. We derive an approximate yet accurate analytical expression for the hitting probability of an IM on each of the FARs while considering the impact of the rest of the FARs. The derived equation is then validated using particle-based simulations.
	\item Using the derived equation for the 3 FAR system, a very simple expression is obtained when the FARs are symmetrically positioned (\ie mutually equidistant and equidistant from the transmitter).
	%\item We obtain several design insights like the fraction of IMs eventually hitting the FARs, quantification the mutual influence of FARs, and the fractions of IMs eventually hitting all of the FARs to the fraction of IMs hitting a single FAR. 
	\item We also study the impact of malicious receivers on the intended receiver's performance by  quantifying the mutual influence of FARs and discuss when the impact is minimal.
	\item We also study the gain in detection performance when FARs can cooperate with each other.
	\item We also provide a framework to derive the  hitting probability expression for an MC system with  $ N $  FARs for an arbitrary value of $N$.
	
\end{itemize}
\textbf{Notation}: 
 $ \mathcal{L}\left[.\right] $ denotes the Laplace transform and 
 $ \mathcal{L}^{-1}\left[.\right] $ denotes the inverse Laplace transform.

\section{System Model} 
	\begin{figure}[ht]
	\centering   
	\includegraphics[width=0.65\linewidth]{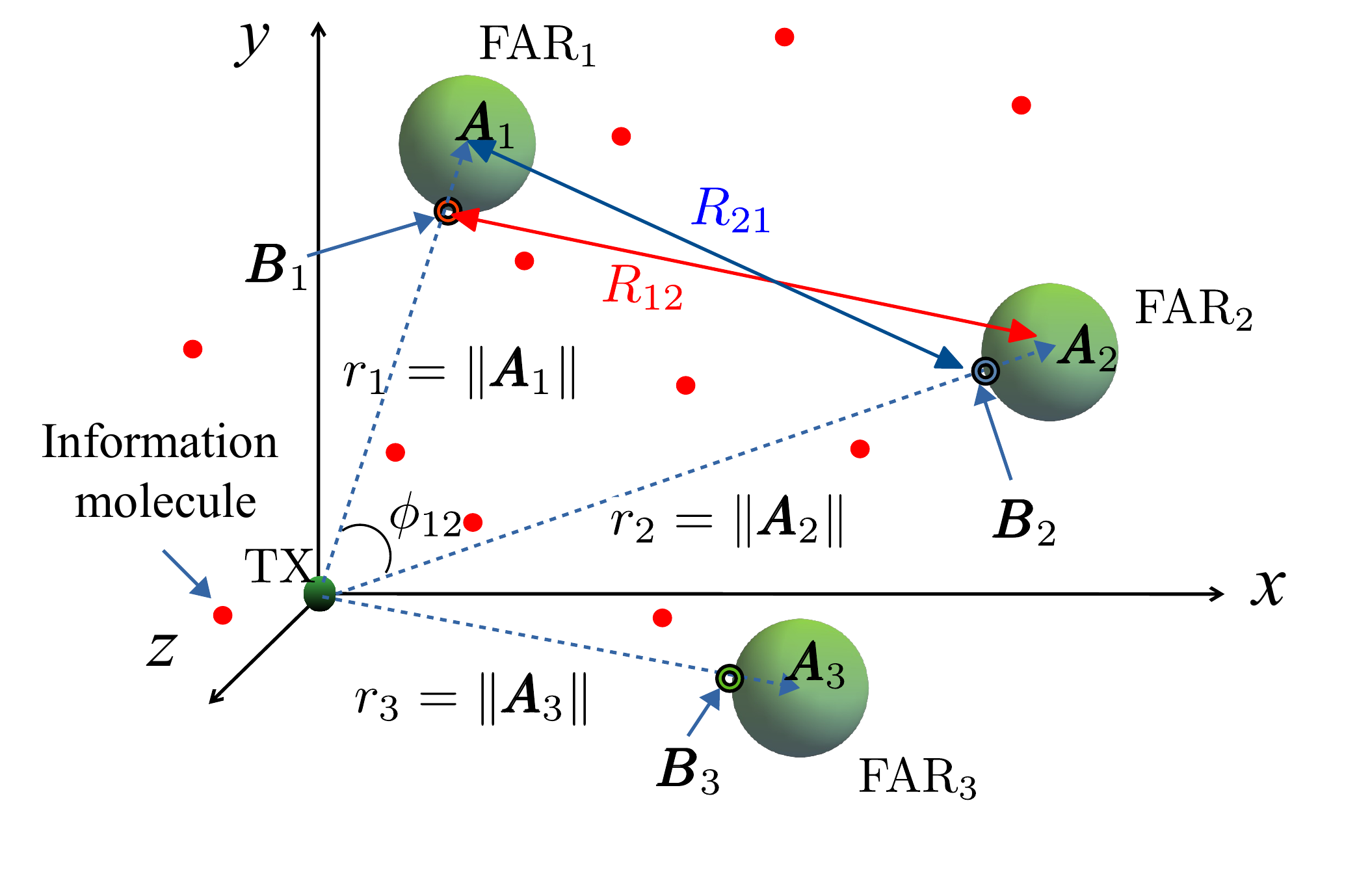}
	\caption{An illustration showing a 3-D MC  system with a point transmitter at the origin and three FARs located at positions $\x_{1},\ \x_{2},\ \text{and},\ \x_{3}$ in $ \mathbb{R}^3 $. }
	\label{Fig:sm}
%	\vspace{-0.5cm}
\end{figure}
We consider a diffusion-limited MC system with a point transmitter located at origin $\mathbf{o}$ and $N=3$ FARs located at $\x_i$'s ($ i\in\{1,2,3\} $) in a 3-D medium as shown in Fig. \ref{Fig:sm}. The radius of all the FARs is assumed to be the same and is denoted by $ a $. 
The IM emitted by the point transmitter propagates in the medium via Brownian motion\cite{einstein2011}. The propagation medium is assumed to be homogeneous, and the diffusion coefficient (denoted by $ D $) of the IM is assumed to be constant over space and time. An IM is absorbed and is detected with a probability $ 1 $ when it hits the surface of FAR. 

The considered setup can be applied to various scenarios, including cooperative and non-cooperative settings. For example, the multiple FARs can be cooperating where they work together as a SIMO/MIMO system to decode the same message transmitted from $\mathbf{o}$. In a different scenario, these FARs can act as malicious or non-cooperative receivers interfering with the other receivers' communication. Without loss of generality, we assume that the first FAR FAR$_1$ is the intended FAR and others are cooperating or non-cooperating FARs. The probability that an IM hits the FAR$_1$ within time $t$  is known as the hitting probability.

Let the distance between the transmitter and the FAR at $ \x_i $ be denoted by $ r_i $, where $ r_i=\|\x_i\|,\ i\in\mathbb{N}$. The angle between $ i $th FAR located at $ \x_i $ (denoted by FAR$ _i $) and FAR$_j $ is represented by $ \phi_{ij},\ i,j\in\mathbb{N},\ i\neq j $. Let us denote the closest point on the surface of FAR$ _i $ from the transmitter by $ \y_{i} $. Therefore the distance between the closest point of FAR$ _i $ from the transmitter  and the center of FAR$ _j $ is
$ 	R_{ij}=\sqrt{(r_i-a)^2+r_{j}^2-2(r_i-a)r_{j}\cos\phi_{ij}}. $

Note that the hitting probability expression is known  for $N=1$ and $N=2$, which are given respectively as  \cite{yilmaz2014,sabu2020a}
%In an MC system with only a point transmitter at the origin and an FAR at $ x_1 $, the hitting probability of an IM on FAR$ _1 $ (or equivalently the fraction of IMs hitting the FAR due to impulsive emission) within time $ t $ is given by \cite{yilmaz2014}
	\begin{align}
		\hpone(t,r_1)&=\frac{a}{r_1}\erfc\left(\frac{r_1-a}{\sqrt{4Dt}}\right),  \text{ and}\label{eq1rx}\\
	%\end{align}
%In an MC system with a point transmitter and two FARs, an approximate equation for the hitting probability of an IM on FAR$ _1 $ in the presence of FAR$_2  $ within time $ t $ is given by \cite{sabu2020a}
	%\begin{align}
		\hp{}(t,\x_1{\mid} \x_2)&=\sum_{n=0}^{\infty} \frac{a^{2 n}}{R_{12}^{n} R_{21}^{n}}\left[\frac{a}{r_1} \erfc\left(\frac{r_1{-}a+n\left(R_{21}{-}a\right)+n\left(R_{12}{-}a\right)}{\sqrt{4 D t}}\right)\right. \nonumber\\
		&\left.\quad -\frac{a^{2}}{r_2 R_{21}} \erfc\left(\frac{r_2{-}a+(n+1)\left(R_{21}{-}a\right)+n\left(R_{12}{-}a\right)}{\sqrt{4 D t}}\right)\right].
		\label{eq2rx}
	\end{align}
%When the number of FARs in a medium increases to two from one, the hitting probability decreases due to increased competition to capture the emitted IM by the additional FAR. Therefore, 
However,  the expression for $N\ge3$ is not available. We will first consider  three FARs (\ie $N=3$) located at $ \x_1,\x_2 $ and $ \x_3 $ and later extend the analysis for the general case.

\section{Channel Model for an MC System with 3-FARs}
In this section, we will analyze an MC system with 3 FARs in terms of hitting probability.
\subsection{Hitting probability}
We now derive the hitting probability of an IM on the FAR in a system with three FARs, located at $\x_1$, $\x_2$ and $\x_3$. It can be seen from \eqref{eq2rx} that $ \hp{}(t,\x_1{\mid} \x_2)\leq \hpone(t,r_1)$, which shows that the hitting probability decreases when an additional receiver is present.  
It is expected that for an MC system with three FARs, the hitting probability would be less than   $ \hp{1}(t,\x_1{\mid} \x_2)$. 
Let us denote the hitting probability until time $ t $ of an IM emitted by a point transmitter located at the origin on FAR$ _1 $ (located at $ \x_1 $) in the presence of the other two FARs by $ \hp{}\left(t,\x_1{\mid} \x_2,\x_3\right)$, %\ i\neq j\neq k, \ i,j,k\in\{1,2,3\}$ or 
or just $ \hp{1}(t) $ (when there is no ambiguity) %Without the loss of generality, we will take $ i=1 $ and derive the hitting probability at FAR$ _1 $ denoted by $ \hp{}\left(t,\x_1{\mid} \x_2,\x_3\right) $ or just $ \hp{1}(t) $. 
%An  analytical equation for $ \hp{1}(t) $
which is given in the theorem below.

\begin{theorem}\label{t3rx}
	In an MC system with a point transmitter at the origin and three FARs located at $\x_1,\ \x_2$ and $\x_3$, the probability that an IM emitted from the point transmitter hits  $\mathrm{FAR} _1 $ within time $ t $ in the presence of other two FARs is approximately given by
	\begin{align}
	\hp{1}(t)&=\mathcal{L}^{-1}\left[\lhp{1}(s)\right],\label{eq3rx}
\end{align}
{ where } $\lhp{1}(s)$ is the Laplace transform of $\hp{1}(t)$ and is given by
\begin{align}
	\lhp{1}(s)= \frac{\lhpone(s,r_1) -s\alpha_1(s)+s^2\beta_1(s)}
	{1-s^2\gamma(s)+s^3\delta(s)}\label{lp3rx}. %\nonumber\\
\end{align}
 Here,
 \begin{align*}
 \alpha_1(s)&=\lhpone(s,r_2)\lhpone(s,R_{21}){+}\lhpone(s,r_{13})\lhpone(s,R_{31})\nonumber\\
 	\beta_1(s)&=-\lhpone(s,r_1)\lhpone(s,R_{23})\lhpone(s,R_{32})+\lhpone(s,r_2)\lhpone(s,R_{23})\lhpone(s,R_{31})\nonumber\\
 	 &\qquad+\lhpone(s,r_{3})\lhpone(s,R_{32})\lhpone(s,R_{21})\nonumber\\
 	\gamma(s)&=\lhpone(s,R_{12})\lhpone(s,R_{21}) + \lhpone(s,R_{32})\lhpone(s,R_{23})
	+\lhpone(s,R_{13}) \lhpone(s,R_{31})\nonumber\\
 	\text{and  }%\nonumber\\
 	\delta(s)&=\lhpone(s,R_{12}) \lhpone(s,R_{23})\lhpone(s,R_{31})
	+ \lhpone(s,R_{13})\lhpone(s,R_{32})\lhpone(s,R_{21})\nonumber.
 \end{align*} 
Here, $\lhpone(s,x)%,\ x\in\{r_{j},R_{ij}\}\ i,j\in\{1,2,3\}, i\neq j
$ is the Laplace transforms of  $\hpone(t,x)$ % and %. $ \lhpone(s,x) $ 
%is 
given as 
\begin{align*}
%\text{Here, }
	\lhpone(s,x)&=\laplace{}{\hpone(t,x)}=\frac{a}{sx}\exp\left(-\left(x-a\right)\sqrt{\frac{s}{D}}\right)\nonumber\hspace{3.5in}\ \quad
\end{align*}
with $\ x\in\{r_j,R_{ij}\}$ for all $i,j\in\{1,2,3\}, i\neq j
$.
\begin{proof}
	See Appendix \ref{a3rx}.
\end{proof}  
\end{theorem}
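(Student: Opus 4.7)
The plan is to extend the renewal/image approximation that yielded \eqref{eq2rx} in \cite{sabu2020a} from two coupled absorbers to three, and then solve the resulting system in the Laplace domain. The starting observation is that the single-receiver probability $\hpone(t,r_i)$ overcounts those IMs that visit some other FAR before reaching $\mathrm{FAR}_i$. Treating each $\mathrm{FAR}_j$ as a point absorber/re-emitter at its closest-to-origin point $\y_j$, the overcounted mass can be written as a convolution of the (true) hitting rate on $\mathrm{FAR}_j$ with the single-receiver first-passage probability from $\y_j$ to $\mathrm{FAR}_i$, yielding, for $i\in\{1,2,3\}$,
\begin{align*}
\hp{i}(t)=\hpone(t,r_i)-\sum_{j\neq i}\int_{0}^{t}\frac{d\hp{j}(\tau)}{d\tau}\,\hpone(t-\tau,R_{ji})\,d\tau.
\end{align*}

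Next, I would take the Laplace transform of this system. Using $\laplace{}{d\hp{j}/dt}=s\lhp{j}(s)$ with $\hp{j}(0)=0$, each convolution becomes a product, and the three equations assemble into $M(s)\,[\lhp{1}(s),\lhp{2}(s),\lhp{3}(s)]^{\top}=[\lhpone(s,r_1),\lhpone(s,r_2),\lhpone(s,r_3)]^{\top}$, where $M(s)$ has unit diagonal and off-diagonal entries $M_{ij}(s)=s\,\lhpone(s,R_{ji})$. Solving by Cramer's rule for $\lhp{1}(s)$, a direct cofactor expansion of $\det M(s)$ reproduces the denominator $1-s^{2}\gamma(s)+s^{3}\delta(s)$ in \eqref{lp3rx}, and replacing the first column of $M(s)$ by the right-hand side gives a determinant that expands to $\lhpone(s,r_1)-s\alpha_1(s)+s^{2}\beta_1(s)$, matching $\alpha_1$, $\beta_1$, $\gamma$, and $\delta$ term by term. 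Inverting the Laplace transform then delivers \eqref{eq3rx}.

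The cofactor algebra is routine and the closed form of $\lhpone(s,x)$ is a standard first-passage identity, so the substantive — and only genuinely approximate — step is the renewal closure itself: treating an IM absorbed at $\mathrm{FAR}_j$ as though re-emitted from the fixed point $\y_j$ and then propagated by the free-space single-receiver density at effective source-to-center distance $R_{ji}$ to $\mathrm{FAR}_i$. The main obstacle will therefore be to argue that this point-absorber surrogate does not introduce errors of larger order than those already accepted for the two-FAR case in \cite{sabu2020a}. A useful sanity check in the proof is to verify that sending any single receiver to infinity decouples its row and column from $M(s)$, collapsing the system to the $2\times2$ problem that reproduces \eqref{eq2rx}; because the three-body closure is built from the very same overcounting relation, one expects comparable accuracy, to be confirmed by the particle-based simulations announced after the theorem.
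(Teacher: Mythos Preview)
Your proposal is correct and follows essentially the same route as the paper: it sets up the identical overcounting/renewal equations $\hpone(t,r_i)-\hp{i}(t)=\sum_{j\ne i}\int_0^t \frac{\partial\hp{j}(\tau)}{\partial\tau}\hpone(t-\tau,R_{ji})\,d\tau$ with the hitting point approximated by $\y_j$, takes Laplace transforms, and solves the resulting $3\times 3$ linear system for $\lhp{1}(s)$. Your explicit matrix/Cramer's-rule presentation and the decoupling sanity check are nice additions, but the underlying derivation is the same as the paper's Appendix~\ref{a3rx}.
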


The source of approximation in the above theorem is the approximation of the hitting point of an IM on the surface of the FAR with the closest point on FAR as given in Appendix \ref{a3rx}. This approximation is of good accuracy and is verified in Section \ref{acc}.\par

\subsection{Evaluation of the hitting probability for $ 3-$FAR case}

Fig. \ref{fig:3rxf1} shows the variation of the hitting probability within time $ t $ with time $ t $ for an MC system with three FARs. The validity of the Theorem \ref{t3rx} is verified using particle-based simulation in MATLAB. In the particle-based simulation for the considered system, an IM is generated at the transmitter at the origin at time $ t=0 $. The IM undergoes Brownian motion with diffusion coefficient $ D $. The position of the IM is tracked for every time step $ \Delta t $. If the distance between the IM and the center of the FAR is less than the radius of the FAR, the IM is considered to be hit at the surface of the FAR. The hitting process is accurate when $ \Delta t $ is too small. This procedure is continued for many iterations, and the mean of the hitting events is taken to get the hitting probability. 
 Fig. \ref{fig:3rxf1} verifies that the derived equations are in good match with the simulation results. We can also observe that the hitting probability of FARs increases when it is near the transmitter.
%In Fig. \ref{fig:3rxf1}, the solid lines represent the analytical value given in \eqref{eq3rx}, and markers represent the simulation values.  The dashed lines show the value in the absence of other FARs, \ie \eqref{eq1rx}. 
%With the increase in time $ t $, the deviation of \eqref{eq3rx} from \eqref{eq1rx} increases. This is due to the increased mutual influence $ \mathsf{q}_i(t) $ of the FARs with time. Therefore it is important to include the effect of mutual influence while designing an MC system with absorbing receivers.

\begin{figure}[ht]
	\centering
	\includegraphics[width=0.65\linewidth]{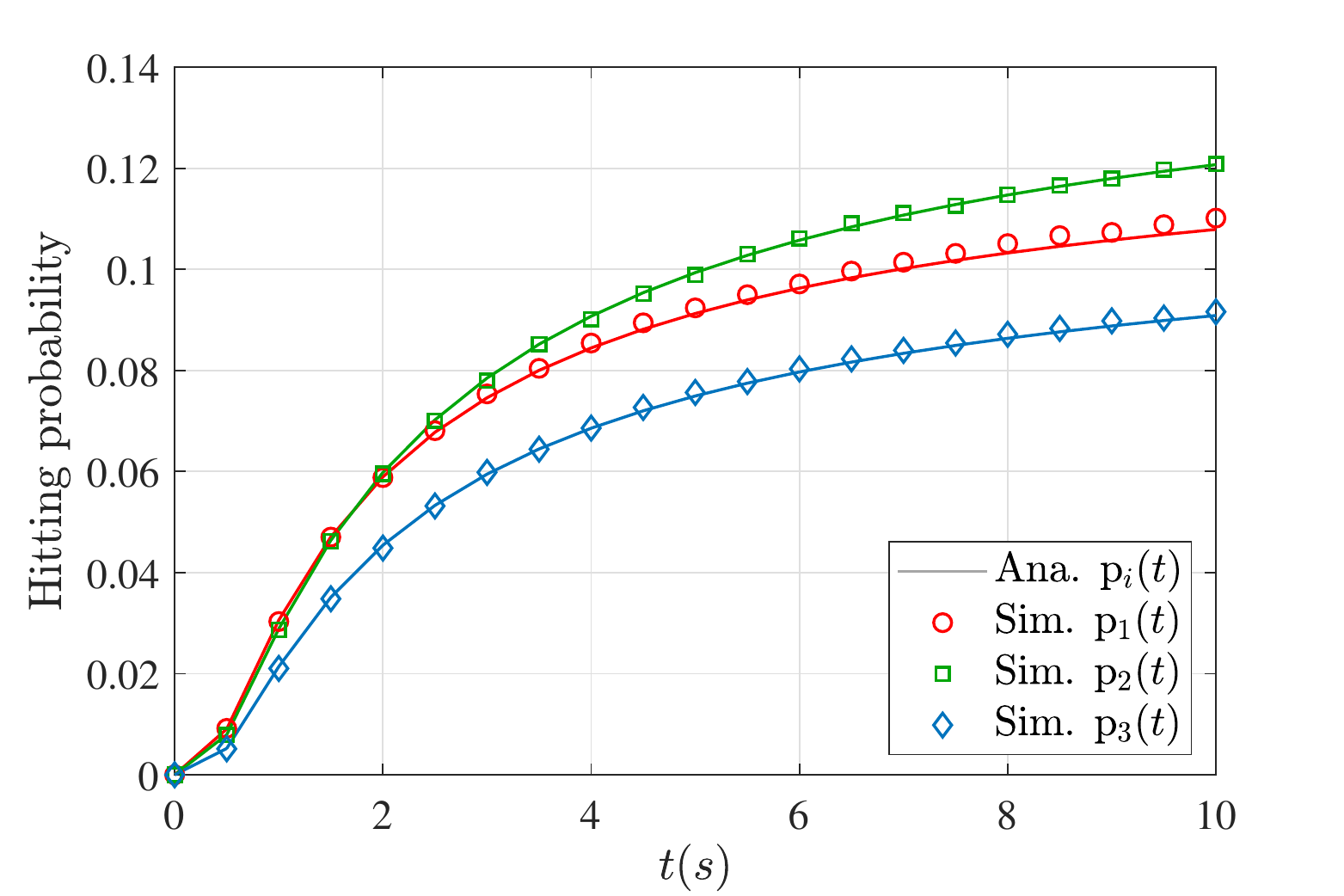}
	\caption{Variation of $\hp{i}(t)$ with time $t$ for a MC system with three FARs. Parameters: $\x_1= [25,0,0],\ \x_2=[-25,5,0],\ \x_3=[20,-15,10],\ a=5\ \mu m,\ D=100\mu m^2/s,\ \Delta t=10^{-5}\ s $.}
	\label{fig:3rxf1}
	\vspace{-0.5cm}
\end{figure}

\subsection{Accuracy of the hitting probability equations}\label{acc}

Fig.  \ref{fig:aberr} shows the variation of the absolute error in the derived hitting probability expression of FAR$ _1 $ 
defined as
\begin{align*}
\mathrm{ Absolute ~error} = \left| \mathrm{Analytical ~result} -\mathrm{ Simulation ~result}\right| , 
\end{align*}
when it is moved in the $y,z$-plane, while FAR$ _2 $ and FAR$ _3 $ are positioned at fixed locations. The shaded regions in gray represents the excluded locations as  the FARs overlap with each other in this region. Fig. \ref{fig:aberr} confirms that the derived equation has good accuracy in most of the regions. However, similar to \cite{sabu2020a}, the accuracy of the derived equation can be bad if the FARs in the medium are too close to each other and the FAR of interest hinders the other FARs from the line of sight of the point transmitter.\par

\begin{figure}[ht]
	\centering
	\includegraphics[width=.65\linewidth]{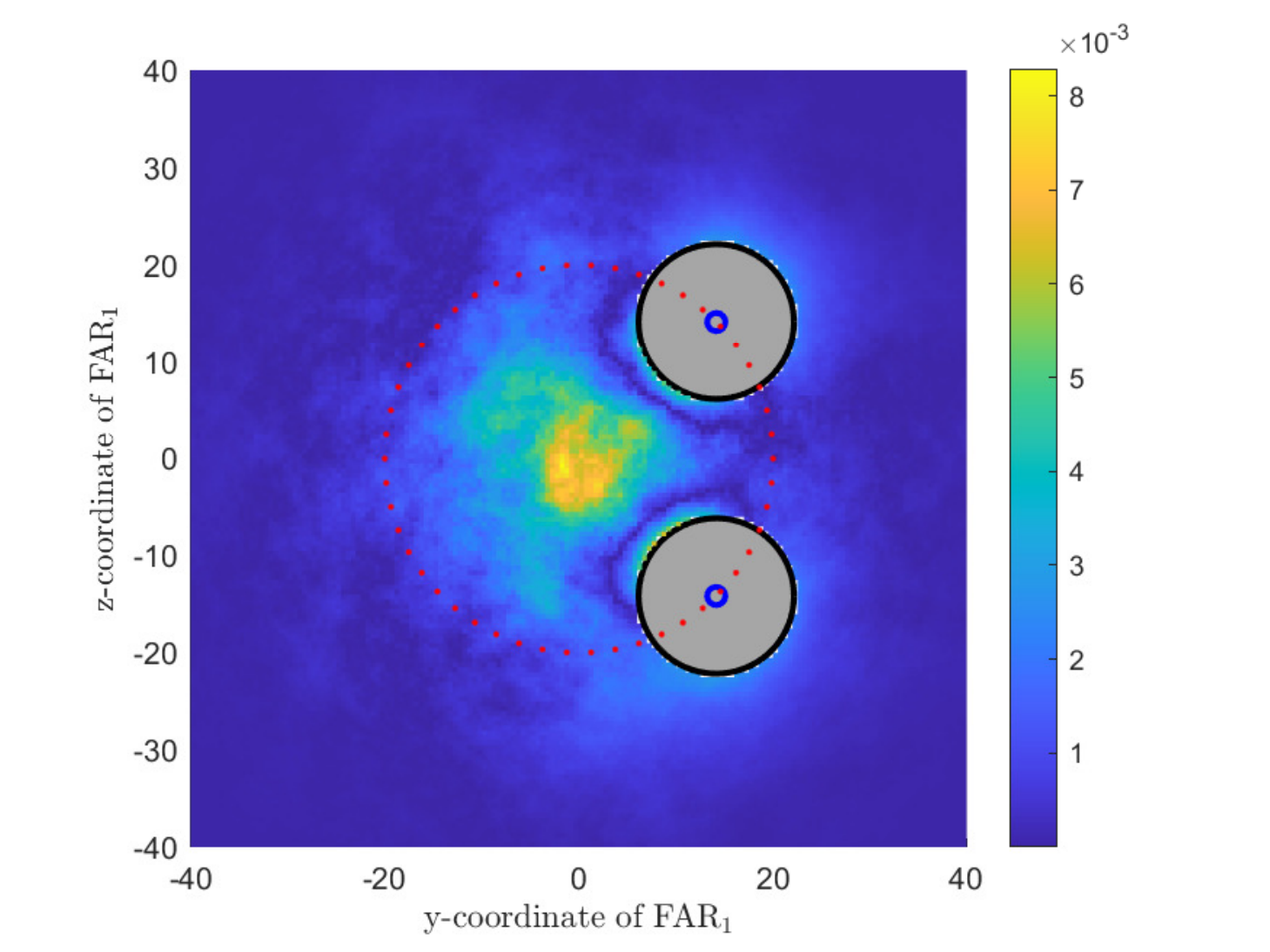}
	\caption{Variation of the absolute error of the hitting probability of FAR$ _3 $ in a $ 3- $FAR system. Parameters: $ a=5\mu m,\ D=100\mu m^2/s,\ t=1s,\ \Delta t=10^{-4}s,\ \x_2=[10,\	14.14,\	14.14],\  \x_3=[10,\	14.14,\	-14.14],\  \x_1=[10,\ y,\	z], \text{where } y,z$ is varied. % in $ y,z $ axis.
	}
	\label{fig:aberr}
	\vspace{-0.4cm}
\end{figure}

The analytical expression given in Theorem \ref{t3rx} simplifies when the FARs are equidistant from the transmitter and equidistant from each other, which we consider next. %The corresponding hitting probability expression is given in Theorem \ref{c3rxeq} of Section \ref{secuca}.

\subsection{Special case: Symmetric system with equidistant FARs}\label{secuca}
\begin{figure}[ht]
	\centering
	\includegraphics[width=.65\linewidth]{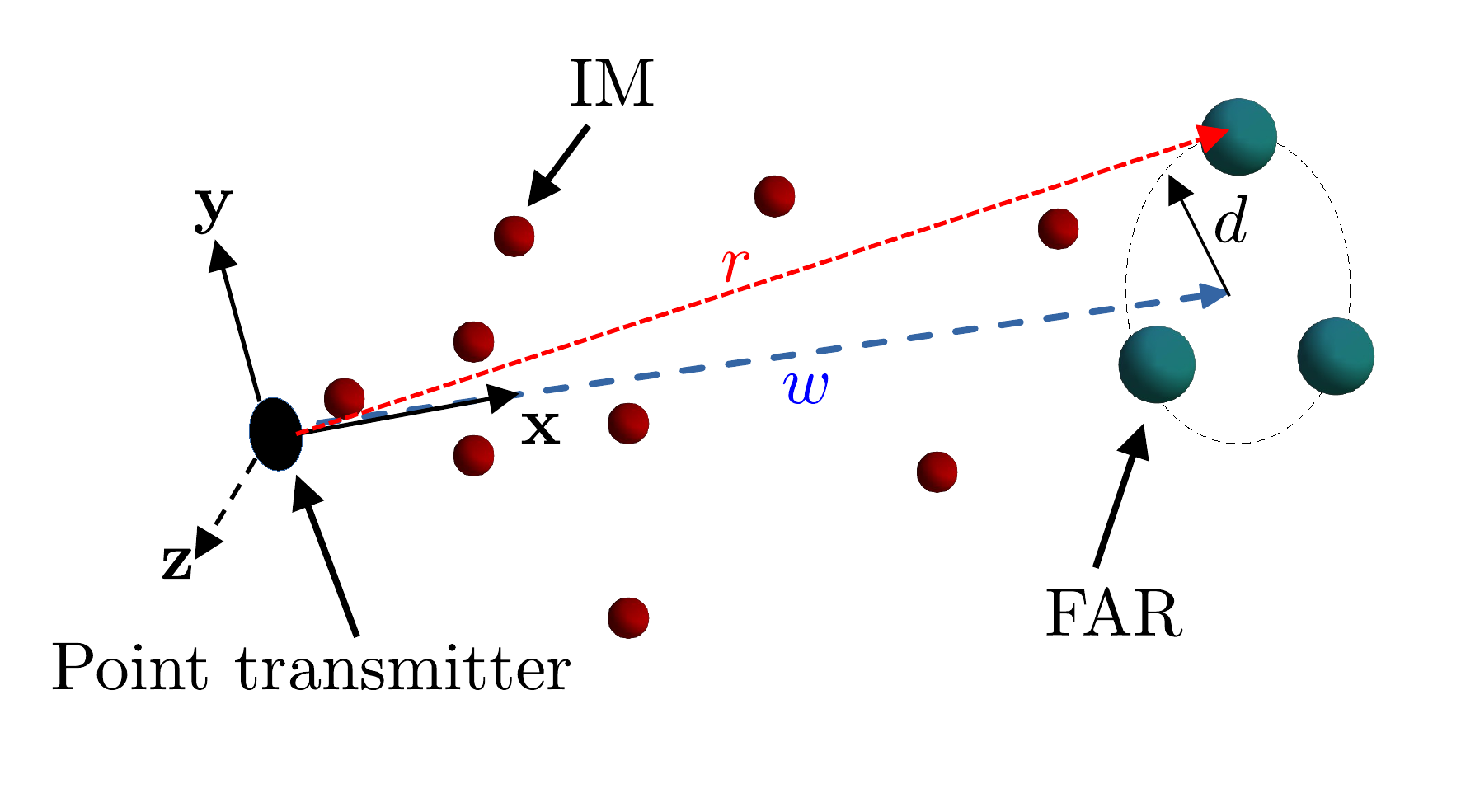}
	\caption{An MC system with three FARs equidistant from each other and the transmitter.}
	\label{fig:uca1}
\end{figure}

Now, we consider a symmetric system where all three FARs are equidistant  the transmitter and their mutual distance is also the same as shown in Fig. \ref{fig:uca1}. The point transmitter is at the origin and the centers of the FARs are located in a circle with center $ [w,0,0] $ and radius $ d $. Therefore the location of the FAR$ _i $ is $\x_i= [w,d\cos\left(2\pi (i-1)/3\right),d\sin\left(2\pi (i-1)/3\right)], \ i\in\{1,2,3\} $. This type of arrangement of FARs is called uniform circular array (UCA) \cite{gursoy2019,tang2021}. 
The hitting probability for this special case %on each of the FARs in an MC system with three FARs that are equidistant from the transmitter and equidistant from each other 
is given in the following theorem.

\begin{theorem}\label{c3rxeq}
	
	In an MC system with three FARs that are equidistant from the transmitter at the origin and equidistant from each other (\ie $ r_i=r \text { and } R_{ij}=R,\ i,j\in{1,2,3},\ i\neq j$), the probability that an IM emitted from the point source at the origin  hits  $\mathrm{FAR} _1 $ within time $ t $ in the presence of other two FARs is 
	%Considering that the FARs are equidistant from the transmitter at the origin and equidistant from each other (\ie $ r_i=r \text { and } R_{ij}=R,\ i,j\in{1,2,3},\ i\neq j$), the hitting probability of the IM on each of the FAR within time $ t $ is given by
	\begin{align}
		\hp{1}(t)=\frac{a}{r}\sum_{n=0}^{\infty}\frac{(-2a)^n}{R^n}\erfc\left(\frac{r-a+n(R-a)}{\sqrt{4Dt}}\right).\label{eq3rxeq}
	\end{align}
\end{theorem}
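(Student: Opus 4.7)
The plan is to specialize Theorem \ref{t3rx} to the symmetric configuration, simplify the resulting rational function in the Laplace domain down to a form whose inverse transform is already known term by term, and then recognize the series.

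First I would introduce shorthand $P(s) = \lhpone(s,r)$ and $Q(s) = \lhpone(s,R)$, since by symmetry $\lhpone(s,r_i) = P$ for all $i$ and $\lhpone(s,R_{ij}) = Q$ for all $i\neq j$. Substituting into the expressions given in Theorem \ref{t3rx}, routine bookkeeping yields $\alpha_1(s) = 2PQ$, $\beta_1(s) = PQ^2$, $\gamma(s) = 3Q^2$, and $\delta(s) = 2Q^3$, so that
\begin{align*}
\lhp{1}(s) = \frac{P - 2sPQ + s^2 P Q^2}{1 - 3s^2 Q^2 + 2s^3 Q^3} = \frac{P(1-sQ)^2}{1 - 3(sQ)^2 + 2(sQ)^3}.
\end{align*}

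The key algebraic step is factoring the denominator. Writing $u = sQ$, one has $2u^3 - 3u^2 + 1 = (u-1)^2(2u+1)$, i.e.\ $1 - 3u^2 + 2u^3 = (1-u)^2(1+2u)$. The factor $(1-sQ)^2$ then cancels with the numerator and one obtains the strikingly simple form
\begin{align*}
\lhp{1}(s) = \frac{P(s)}{1 + 2sQ(s)}.
\end{align*}
Since $sQ(s) = (a/R)\exp(-(R-a)\sqrt{s/D})$ is small for large $s$, I would expand as a geometric series $\lhp{1}(s) = P(s)\sum_{n=0}^{\infty}(-2sQ(s))^n$, which gives
\begin{align*}
\lhp{1}(s) = \sum_{n=0}^{\infty} \frac{(-2a)^n}{R^n}\cdot \frac{a}{r}\cdot \frac{1}{s}\exp\!\left(-\bigl(r-a+n(R-a)\bigr)\sqrt{s/D}\right).
\end{align*}

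Finally, using the standard pair $\mathcal{L}^{-1}\!\left[\frac{1}{s}e^{-c\sqrt{s/D}}\right] = \erfc\!\left(c/\sqrt{4Dt}\right)$ for $c>0$, I invert term by term to recover \eqref{eq3rxeq}. The main obstacle is purely the algebraic collapse: correctly performing the sums defining $\alpha_1,\beta_1,\gamma,\delta$ in the symmetric setting and spotting the double-root factorization of $2u^3-3u^2+1$; once that cancellation is seen, the rest is a mechanical geometric-series expansion followed by term-by-term Laplace inversion, and termwise inversion is justified by the exponential decay in $n$ of each summand for fixed $s$ (ensuring the series and its inverse converge absolutely for $R > 2a$, which is guaranteed since the FARs do not overlap).
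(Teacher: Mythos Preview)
Your proof is correct and follows essentially the same route as the paper: specialize Theorem~\ref{t3rx} to the symmetric case, collapse the rational expression in the Laplace domain to $\lhpone(s,r)/(1+2s\,\lhpone(s,R))$, expand as a geometric series, and invert term by term using the standard $\erfc$ pair. Your write-up is in fact slightly more explicit than the paper's, since you exhibit the factorization $1-3u^2+2u^3=(1-u)^2(1+2u)$ that drives the cancellation (the paper simply states the simplified form), and you add a remark on the convergence condition $R>2a$ that the paper omits.
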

\begin{proof}
	See Appendix \ref{a3rxeq}.
\end{proof}
\begin{coro}\label{corolim}
	The fractions of IMs that hits each of the FARs eventually (\ie $ t\rightarrow\infty $) in an MC system with three FARs that are equidistant from the transmitter at the origin and equidistant from each other is given by
	\begin{align}
		\hp{\infty}=\frac{a/r}{1+2a/R}.
	\end{align}
\begin{proof}
	Applying the final value theorem for the Laplace transform (\ie $ \lim_{t\rightarrow\infty} f(t)=\lim_{s\rightarrow0}sF(s)$) \cite{spiegel1992} in \eqref{eq3rxeq} gives Corollary \ref{corolim}.
\end{proof}
\end{coro}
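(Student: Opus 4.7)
The plan is to obtain $\hp{\infty}$ by applying the final-value theorem for the Laplace transform (as the hint indicates), after first specializing the general Laplace-domain formula \eqref{lp3rx} from Theorem \ref{t3rx} to the symmetric configuration.

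First, I would substitute $r_i=r$ and $R_{ij}=R$ (for all $i\neq j$) into the expressions for $\alpha_1$, $\beta_1$, $\gamma$ and $\delta$. Writing $p(s):=\lhpone(s,r)$ and $q(s):=\lhpone(s,R)$, every $\lhpone(s,r_i)$ becomes $p$ and every $\lhpone(s,R_{ij})$ becomes $q$. A quick term count then yields $\alpha_1=2pq$, $\beta_1=pq^{2}$, $\gamma=3q^{2}$ and $\delta=2q^{3}$.

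Substituting into \eqref{lp3rx}, the numerator becomes $p-2spq+s^{2}pq^{2}=p(1-sq)^{2}$, while the denominator $1-3s^{2}q^{2}+2s^{3}q^{3}$ factors as $(1-sq)^{2}(1+2sq)$. The common square cancels, leaving the strikingly simple form
\begin{align*}
\lhp{1}(s) \;=\; \frac{p(s)}{1+2s\,q(s)}.
\end{align*}
Because $\hp{1}(t)$ is nondecreasing and bounded above by $1$, its limit as $t\to\infty$ exists, so the final-value theorem applies. Using the closed form $s\,\lhpone(s,x)=(a/x)\exp(-(x-a)\sqrt{s/D})$, I compute $\lim_{s\to 0^{+}} s\,p(s)=a/r$ and $\lim_{s\to 0^{+}} s\,q(s)=a/R$, and therefore
\begin{align*}
\hp{\infty} \;=\; \lim_{s\to 0^{+}} s\,\lhp{1}(s) \;=\; \lim_{s\to 0^{+}} \frac{s\,p(s)}{1+2s\,q(s)} \;=\; \frac{a/r}{1+2a/R}.
\end{align*}

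The only mildly non-routine step is spotting the factorization $1-3u^{2}+2u^{3}=(1-u)^{2}(1+2u)$ (with $u=sq$) which causes the rational expression to collapse; after that the rest is bookkeeping. As an independent sanity check, I would also take $t\to\infty$ termwise directly in \eqref{eq3rxeq}: since $\erfc(u)\to 1$ as $u\to 0^{+}$, each summand tends to $(a/r)(-2a/R)^{n}$, and for $R>2a$ (the non-overlap regime of interest) the resulting geometric series sums to $(a/r)/(1+2a/R)$, the termwise passage being justified by the Weierstrass M-test with $M_n=(2a/R)^{n}$.
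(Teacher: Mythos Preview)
Your argument is correct and follows essentially the same route as the paper: the simplification of \eqref{lp3rx} under symmetry to $\lhp{1}(s)=\lhpone(s,r)/(1+2s\,\lhpone(s,R))$ is exactly what the paper establishes as \eqref{l3rx} in the proof of Theorem~\ref{c3rxeq}, after which the final-value theorem gives the result. Your additional termwise-limit check on \eqref{eq3rxeq} is a nice independent verification not spelled out in the paper.
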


%\subsection{Evaluation of hitting probability on each of the FAR in an MC system with Three Equi-distant FARs}

Fig. \ref{fig:htd} shows the variation of hitting probability (\ie eq. \eqref{eq3rxeq}) to the diffusion coefficient for different values of FAR radius. The equation \eqref{eq3rxeq} is validated using particle-based simulations. As $ D $ increases, the hitting probability increases due to the faster motion of the IMs. The increase in the hitting probability with the increase in $ a $ is due to the increase in the surface area of the FARs and thereby the increase in the probability of capturing the IM. 

  \begin{figure}[ht]
	\centering
	\includegraphics[width=0.65\linewidth]{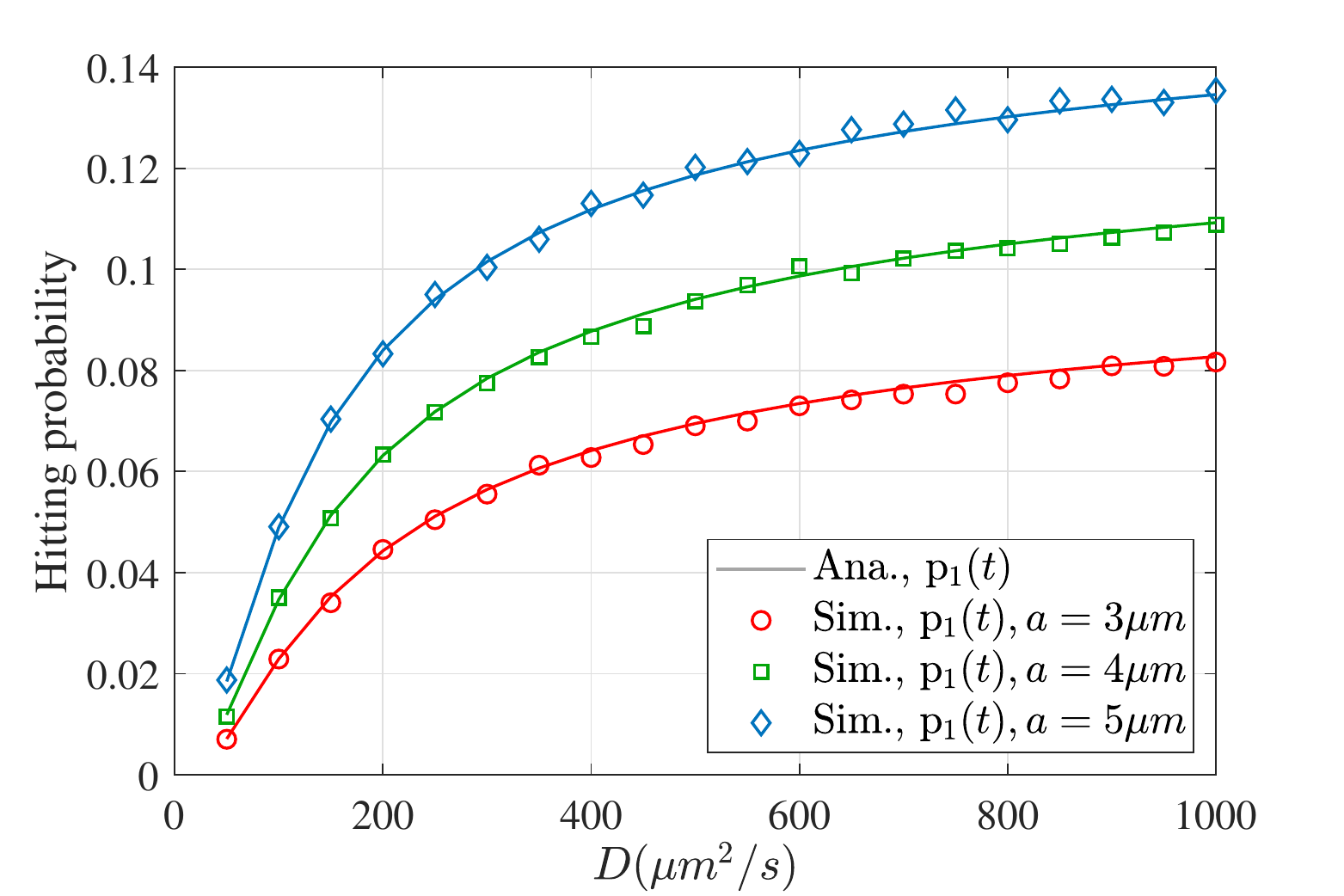}
	\caption{Variation of the hitting probability $\hp{1}(t)$ with diffusion coefficient $ D $ for different values of $ a $ for the symmetric case. Here, $d=20 \mu m, w=10\mu m,\ r=22.36\mu m,\ 
		% \x_i=[10,d\cos(2\pi (i-1)/N), d\sin(2\pi (i-1)/N)],\ i=1,2,\cdots,N,\ \Delta 
		t=10^{-4}\ s,\ t=1s $.}
	\label{fig:htd}
	\vspace{-0.3cm}
\end{figure}

We now discuss two specific scenarios to demonstrate the applicability of the desired result in the next two subsections.

\subsection{Impact of malicious receivers}
%\begin{lemma}\label{cdiff}
Let us consider a scenario where FAR$_1$ is the intended receiver for the transmitter, and the rest of the two FARs are acting as malicious receivers. 
The relative change in the IMs that were supposed to hit the intended FAR$ _1 $ within time $ t $, but are hitting the other two FARs  first is given by
	\begin{align}
		\hpdiff{1}(t)=\frac{\hpone(t,r_1)-\hp{1}(t)}{\hpone(t,r_1)}.\label{ediff}
	\end{align}
%\end{lemma}
Note that, $ \hpdiff{1}(t) $ quantifies the relative influence of the malicious receivers on the desired receiver in capturing the IMs. By useful IMs, we mean the IMs that would reach FAR$ _1 $ and are responsible for detection. The more the $ \hpdiff{1}(t) $ is, the more the IMs captured by the malicious FARs would be. This will result in the degradation of hitting probability of the intended receiver, degrading its communication performance. To avoid the malicious FARs to degrade the desired communication, $ \hpdiff{1}(t) $ has to be minimal.

\begin{coro}
The relative change in the IMs that are supposed to hit the intended FAR eventually, but are hitting the other two FARs in the symmetric system (\ie all the FARs that are equidistant from the transmitter at the origin and equidistant from each other) is given 
	by\begin{align}
		\hpdiff{\infty}&=
		%\hpone(\infty,r)-\hp{\infty}%\nonumber\\
		%&
		\frac{1}{1+R/2a}.
	\end{align}
\end{coro}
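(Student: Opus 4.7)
The plan is to obtain $\hpdiff{\infty}$ by taking the limit $t\to\infty$ directly in the definition \eqref{ediff}, substituting the two asymptotic pieces that are already available: the one-FAR limit $\hpone(\infty, r)$ coming from \eqref{eq1rx}, and the three-FAR symmetric limit $\hp{\infty}$ given by Corollary \ref{corolim}.

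First I would evaluate $\lim_{t\to\infty}\hpone(t,r)$ using \eqref{eq1rx}. Since $\erfc(0)=1$, the argument $(r-a)/\sqrt{4Dt}\to 0$ as $t\to\infty$, giving $\hpone(\infty,r)=a/r$. Next, by Corollary \ref{corolim}, the symmetric three-FAR limit is $\hp{\infty}=\frac{a/r}{1+2a/R}$. Plugging both into \eqref{ediff}, I would then simplify:
\begin{align*}
\hpdiff{\infty}
=\frac{\hpone(\infty,r)-\hp{\infty}}{\hpone(\infty,r)}
=1-\frac{\hp{\infty}}{\hpone(\infty,r)}
=1-\frac{1}{1+2a/R}
=\frac{2a/R}{1+2a/R}
=\frac{1}{1+R/(2a)},
\end{align*}
which is the claimed expression.

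There is essentially no obstacle here: the work is a one-line algebraic manipulation once the two asymptotic values are in hand. The only subtlety worth noting is justifying that the interchange of limit and ratio is legitimate, which holds because $\hpone(\infty,r)=a/r>0$ so the denominator in \eqref{ediff} stays bounded away from zero as $t\to\infty$. Hence the corollary follows immediately from \eqref{eq1rx} and Corollary \ref{corolim}.
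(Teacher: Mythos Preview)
Your proposal is correct and is exactly the intended derivation: the paper states this corollary without an explicit proof, but it follows immediately from substituting $\hpone(\infty,r)=a/r$ (from \eqref{eq1rx}) and $\hp{\infty}$ from Corollary~\ref{corolim} into the definition \eqref{ediff}, precisely as you do.
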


Therefore, the malicious receivers' influence is minimal when $a$ is small and $ r $ and $ R $ are large. For a secure MC system, the eavesdropping by the malicious FARs is minimal when the FARs are of small radii and located far from the transmitters and each other. This can also be verified from Fig. \ref{fig:qfig} which shows the variation of this influence with respect to the angular separation between the receivers. With sufficient angular separation, effect of malicious receivers can be reduced signicantly.

  \begin{figure}[ht]
	\centering
	\includegraphics[width=0.65\linewidth]{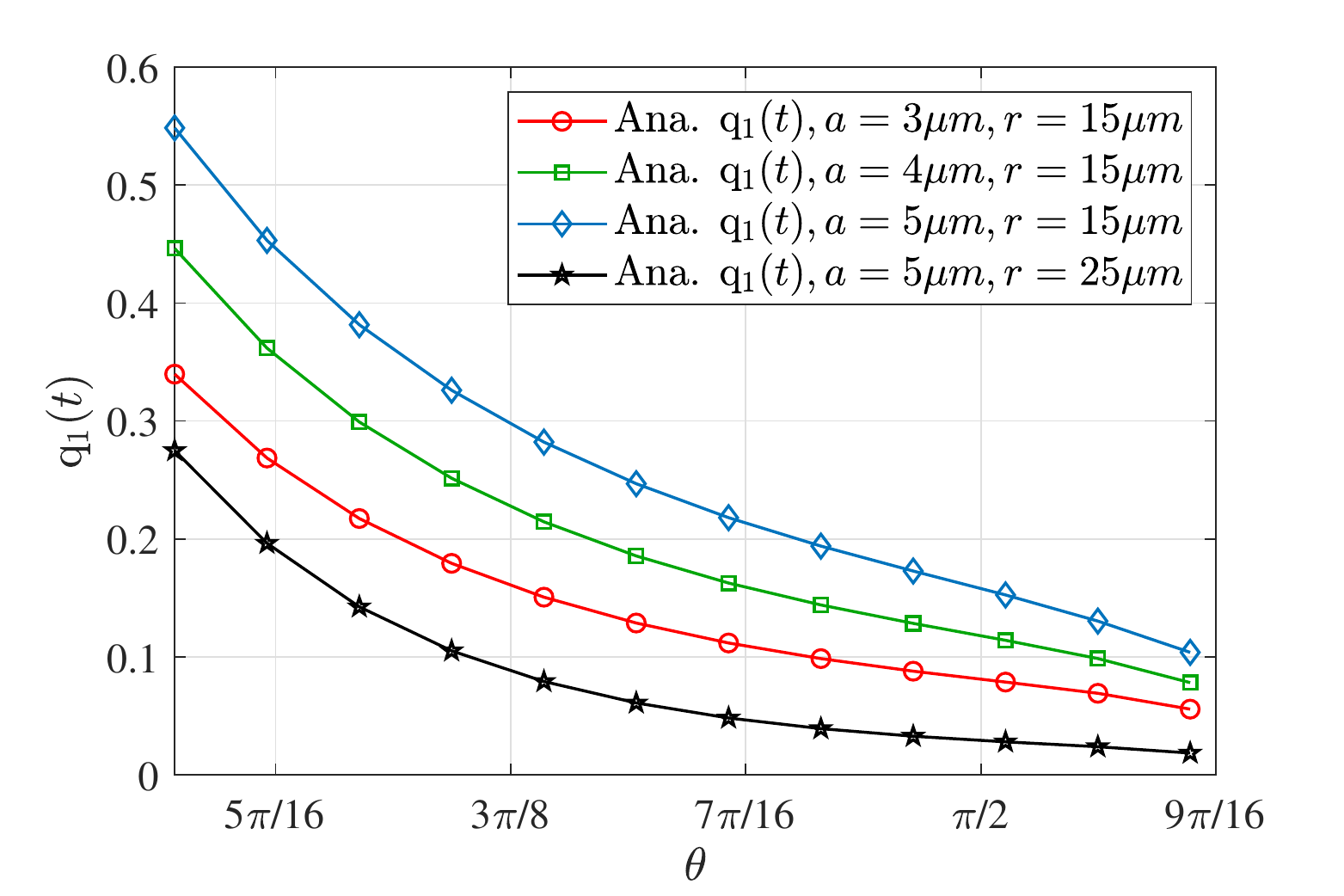}
	\caption{Variation of $  \hpdiff{1}(t) $ with angle between the intended and malicious receivers (\ie $ \phi_{12}=\phi_{13}=\theta $). Here, the intended receiver is located at $ \x_1=[r,0,0] $ and the malicious receivers are located $ \x_2=[r\cos(\theta),r\sin(\theta),0] $ and $ \x_3=[r\cos(-\theta),r\sin(-\theta),0] $ with  $D=100 \mu m^2/s$.}
	\label{fig:qfig}.
	\vspace{-0.5cm}
\end{figure}

  \begin{figure}
	\centering
	\includegraphics[width=0.65\linewidth]{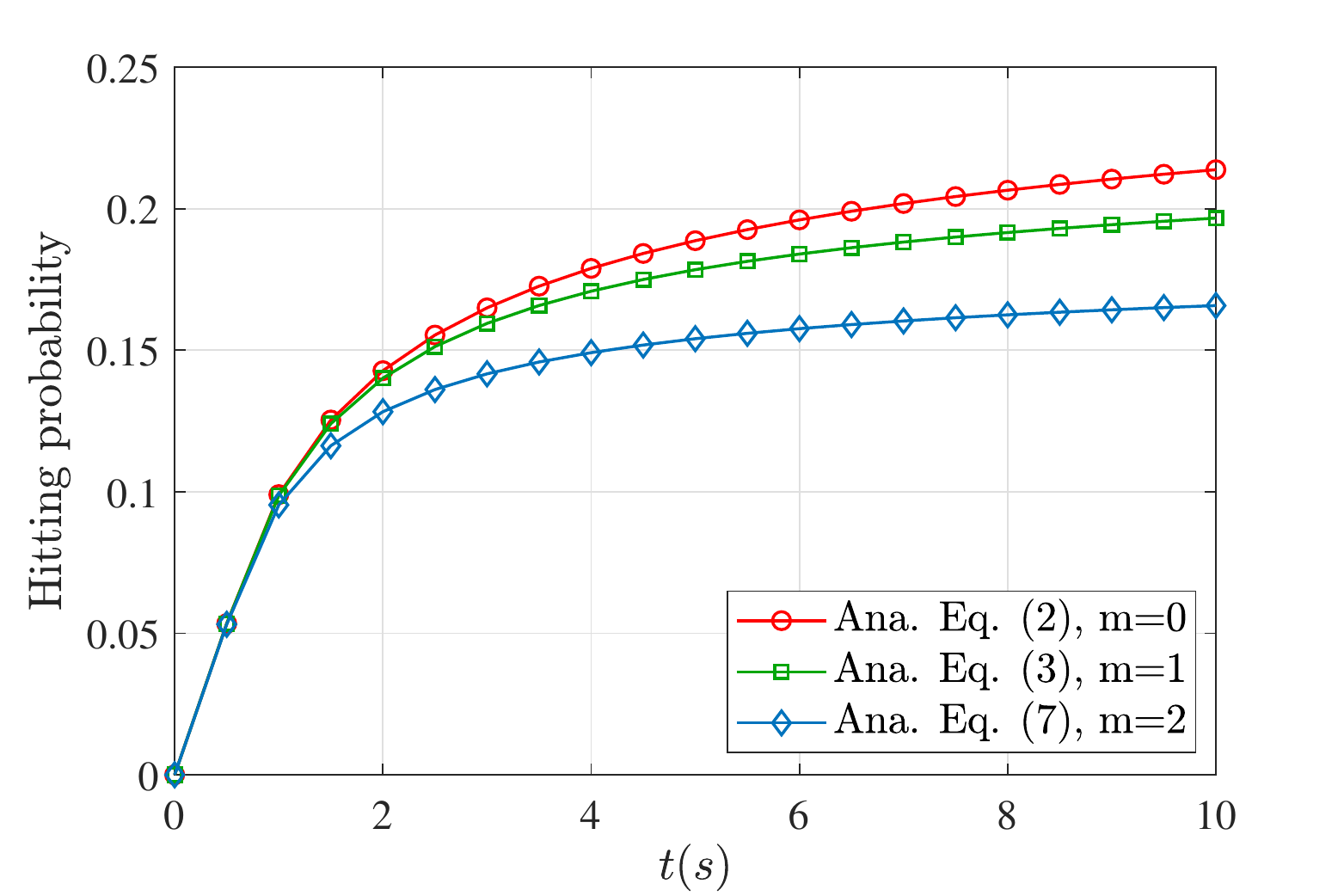}
	\caption{Variation of the hitting probability $\hp{1}(t)$ with $ t $ in the presence of $m$ malicious receivers. Here, receivers are located according to Fig. \ref{fig:uca1} with $d=20 \mu m, w=10\mu m,\ r=22.36\mu m,\ a=4\mu m,\
		% \x_i=[10,d\cos(2\pi (i-1)/N), d\sin(2\pi (i-1)/N)],\ i=1,2,\cdots,N,\ \Delta 
		t=10^{-4}\ s,\ t=1s $.}
	\label{fig:nuca}
	\vspace{-0.5cm}
\end{figure}
Fig. \ref{fig:nuca} shows the variation of the hitting probability within time $ t $ with $ t $ for $ m$ number of malicious receivers with values $m=0,1,2$. Here, $m=0$ refers to the case when no malicious receiver is present. When the number of malicious receivers is increased, the hitting probability decreases due to the increase in their influence. The widening gap between the curves shows that the degrading effect of these receivers grows with time. Therefore, it is essential to consider the effect of this influence while designing MC systems with multiple malicious FARs.

\subsection{Impact of cooperating receivers}
Let us consider a scenario where FAR$_1$ is the intended receiver, and the rest of the two FARs cooperate in the receiving. In this case, the receiver can detect the transmission if any of the FARs have received IMs. The fractions of IMs  hitting any of the FARs within time $t$ to the fraction of IMs hitting a single FAR in the absence of other FARs  is given by
	\begin{align}
		\mathsf{s}_1(t)&=
		\frac{\sum_i\hp{i}(t)}
		{\hpone(\infty,r_1)}.
	\end{align}
$ \mathsf{s}_1(t) $ can be considered as the array gain when multiple FARs are employed instead of a single FAR. 
	
\begin{coro}
	The fractions of IMs eventually hitting any of the FARs  to the fraction of IMs hitting a single FAR in the absence of other FARs  in a symmetric system (\ie all the FARs that are equidistant from the transmitter at the origin and equidistant from each other) is given by
	\begin{align}
		\mathsf{s}_\infty&=\frac{3\hp{\infty}}{\hpone(\infty,r)}
		=\frac{3}{1+2a/R}.
	\end{align}
\end{coro}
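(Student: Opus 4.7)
The plan is to combine the rotational symmetry of the UCA configuration with the limiting formula of Corollary~\ref{corolim}. First, I would observe that in the symmetric setting the three FARs are related by a rotation of the medium about the axis through the origin and $[w,0,0]$ that fixes the point transmitter; since diffusion is isotropic and the boundary conditions on each absorbing sphere are identical, this rotational symmetry forces $\hp{1}(t)=\hp{2}(t)=\hp{3}(t)$ for every $t\ge 0$. Hence the numerator $\sum_{i}\hp{i}(t)$ appearing in the definition of $\mathsf{s}_1(t)$ collapses to $3\hp{1}(t)$, so in particular $\mathsf{s}_\infty = \lim_{t\to\infty}\mathsf{s}_1(t)=3\lim_{t\to\infty}\hp{1}(t)/\hpone(\infty,r)$.

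Next I would pass to the limit $t\to\infty$ in numerator and denominator separately. For the numerator, Corollary~\ref{corolim} gives $\lim_{t\to\infty}\hp{1}(t)=\hp{\infty}=\frac{a/r}{1+2a/R}$. For the denominator, I would use \eqref{eq1rx}: as $t\to\infty$, the argument $(r-a)/\sqrt{4Dt}$ of the complementary error function tends to $0$, and since $\erfc(0)=1$ we obtain $\hpone(\infty,r)=a/r$.

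Combining these limits, the factor $a/r$ cancels in the ratio and $\mathsf{s}_\infty$ reduces to $3/(1+2a/R)$, which is the claimed expression. The argument presents no real obstacle; the only step that deserves explicit comment is the symmetry reduction in the first paragraph, which is what allows the limit of the sum to be written as three times the single-receiver limit rather than evaluating each index separately via Theorem~\ref{t3rx}.
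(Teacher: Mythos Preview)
Your argument is correct and matches the paper's implicit derivation: the corollary itself already records the intermediate step $\mathsf{s}_\infty=3\hp{\infty}/\hpone(\infty,r)$, which is exactly your symmetry reduction, and the final equality follows by substituting Corollary~\ref{corolim} together with $\hpone(\infty,r)=a/r$. One small remark: in the definition of $\mathsf{s}_1(t)$ the denominator is already $\hpone(\infty,r_1)$, so no limit needs to be taken there---but this does not affect your computation.
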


Note that the asymptotic array gain $\mathsf{s}_\infty$ is less than 3. Further, it decreases when FARs are placed closer to each other (\ie $ R $ decreases). When FARs are far apart (\ie $ R\rightarrow\infty $), $ \mathsf{s}\rightarrow 3 $.
%The hitting rate of the IMs on the FARs that are equidistant from the transmitter at the origin and equidistant from each other can be obtained by taking the derivative of \eqref{eq3rxeq}. The analytical expression for the hitting rate is given in Corollary \ref{chr}.
%\begin{coro}\label{chr}
%		In an MC system with three FARs that are equidistant from the transmitter at the origin and equidistant from each other, the hitting rate of IMs on  $\mathrm{FAR} _1 $ is given by
%		\begin{align}
%			\hr{1}(t)=&\frac{a}{r}\sum_{n=0}^{\infty}\frac{(-2a)^n}{R^n}\frac{r-a+n(R-a)}{\sqrt{4\pi D}t^{3/2}}\nonumber\\
%			&\times\exp\left(\frac{\left(r-a+n(R-a)\right)^2}{4Dt}\right).\label{eqchr}
%		\end{align} 
%\end{coro}

Having derived the hitting probability for $N=3$ FARs, we  will now show how we can extend the proposed analysis for the general value of $N$ FARs.

\section{Channel Model for a  MC system with $N$ FARs %general $ 1\times N $ SIMO MC System
}
We now consider that there are $ N $ FARs coexisting in the same communication channel. %Consider that the the FAR of interest is FAR$_i,\ i\in\{1,2,\cdots,N\} $ and 
Let us first focus on FAR$_i$. 
We assume that the point transmitter at the origin emits an IM at time $ t=0 $. The probability that this IM  first hits the surface of FAR$_{j}$ %, \ j\in\{1,2,\cdots,N\},\ 
($j\neq i $) in the time interval $[\tau,\tau+\dd\tau] $ is  $\displaystyle \frac{\partial \hp{j}(\tau)}{\partial \tau} \dd\tau$. If the IM were not absorbed by FAR$ _j $, it would have eventually hit the FAR$ _i $ in the remaining $ t-\tau $ time with probability $\hpone(t-\tau,R_{ji})$, where $R_{ji}$ is the distance between the nearest point ($ \y_j $) on the surface of FAR$_{j} $  from the transmitter and the center of FAR$_{i} $ ($ \x_i $). Note that, the hitting point of IM at FAR$_j$ is approximated  by $\y_j$.

Similar to Appendix \ref{a3rx}, the hitting probability of an IM that was supposed to hit FAR$ _{i} $ within time $ t $ but is hitting the other $ N-1 $ FARs is given by 
\begin{align}
	\hpone(t,r_i) - \hp{i}(t)   =&  \sum_{j=1,\ j\neq i}^{N} 
	\int_{0}^{t} \frac{\partial \hp{j}(\tau)}{\partial \tau}\hpone(t-\tau,R_{ji}) 	\dd\tau.\label{eqtnxr} 
\end{align}
Now taking the Laplace transform of \eqref{eqtnxr} gives
\begin{align}
	\lhpone(s,r_i) -\lhp{i}(s)  =  \sum_{j=1,\ j\neq i}^{N} 
	s\lhp{j}(s) \lhpone(s,R_{ji}) , \ \ \forall\ i.\label{eqsnxr} 
\end{align}
Note that, \eqref{eqsnxr} forms a system of $ N $ linear equations with $ N $ unknowns for $ i\in\{1,2,\cdots,N\} $. Solving the system of linear equations gives the value of hitting probability at the $ i $th FAR, \ie $ \hp{i}(t) $. The solution to the equations and further analysis is left for the future work.
\vspace{-0.2cm}
\section{Conclusions}
Multiple absorbing receivers may coexist in the same communication channel in a molecular communication system. It is essential to obtain the analytical channel model for analyzing and understanding such systems. However, exact channel models are not available in the literature due to the mathematical intractability in solving the diffusion equations. In this work, we bridged this gap by deriving an approximate analytical expression for the hitting probability of an IM emitted by a point transmitter of each FARs. The derived hitting probability expression was simplified by considering that the FARs are equidistant from each other and the transmitter. Using the derived expressions, we discussed several design insights like quantifying the mutual influence of FARs, the fraction of IMs eventually hitting the FARs, and the fractions of IMs eventually hitting all of the FARs to the fraction of IMs hitting a single FAR.  We also provided an analytical framework to derive the approximate hitting probability expression for an MC system with $ N $ FARs, and further analysis will be done as future work. The derived analytical expressions can be used to analyze and design SIMO and MIMO systems to improve the reliability and data rate. It also helps design communication systems that secure the desired receiver bio-nanomachines from eavesdroppers.
\vspace{-0.2cm}
\appendices
	\section{Derivation of Fraction of IMs Absorbed at Each FAR in a $3-$FARs System}\label{a3rx}

Consider that the point transmitter at the origin emits an IM at time $ t=0 $. The probability that this IM  first hits the surface of FAR$_{2} $ in the time interval $[\tau,\tau+\dd\tau] $ is  $\frac{\partial \hp{2}(\tau)}{\partial \tau} \dd\tau$. If not absorbed by FAR$ _2 $, the IM would have eventually hit the FAR$ _1 $ in the remaining $ t-\tau $ time with probability $\hpone(t-\tau,R_{21})$, where $R_{21}$ is the distance between the nearest point ($ \y_2 $) on the surface of FAR$_{2} $  from the transmitter and the center of FAR$_{1} $ ($ \x_1 $). Here, the hitting point of IM at FAR$_2$ is approximated  by $\y_2$. Similarly, the IM emitted from the point source at the origin first hits the surface of FAR$ _3 $ in the time interval $[\tau,\tau+\dd\tau] $ with probability  $\frac{\partial \hp{3}(\tau)}{\partial \tau} \dd\tau$. If not absorbed by FAR$_3 $, the IM would have eventually hit the FAR$ _1 $ in the remaining $ t-\tau $ time  with probability $\hpone(t-\tau,R_{31})$, where $R_{31}$ is the distance between the nearest point ($ \y_3 $) on the surface of FAR$_{3} $  from the transmitter and the center of FAR$_{1} $ ($ \x_1 $). Therefore, 
the hitting probability of an IM that was supposed to hit FAR$ _{1} $ within time $ t $ but is hitting either FAR$ _{2} $ or FAR$ _{3} $ is given by 
\begin{align}
	\hpone(t,r_{1}) - \hp{1}(t)   =&   
	\int_{0}^{t} \frac{\partial \hp{2}(\tau)}{\partial \tau}\hpone(t-\tau,R_{21}) 	\dd\tau  
	+ 
	\int_{0}^{t} \frac{\partial \hp{3}(\tau)}{\partial \tau}\hpone(t-\tau,R_{31}) \dd\tau , \label{eqn1}
\end{align}
Similarly, the hitting probability of an IM that was supposed to hit FAR$ _{2} $ within time $ t $ but is hitting either FAR$ _{1} $ or FAR$ _{3} $, and the hitting probability of an IM that is supposed to hit FAR$ _{3} $ within time $ t $ but is hitting either FAR$ _{1} $ or FAR$ _{2} $  first is given by 
\begin{align}
	\hpone(t,r_{2}) - \hp{2}(t)   = &  
	\int_{0}^{t} \frac{\partial \hp{3}(\tau)}{\partial \tau}\hpone(t-\tau,R_{32}) \dd\tau  
	+ \int_{0}^{t} \frac{\partial \hp{1}(\tau)}{\partial \tau}\hpone(t-\tau,R_{12}) \dd\tau ,
	\label{eqn2}\\
%\end{align}
\text{and }
%\begin{align}
	\hpone(t,r_{3}) -  \hp{3}(t)   = &   
	\int_{0}^{t} \frac{\partial \hp{1}(\tau)}{\partial \tau}\hpone(t-\tau,R_{13}) \dd\tau  
	+ \int_{0}^{t} \frac{\partial \hp{2}(\tau)}{\partial \tau}\hpone(t-\tau,R_{23}) \dd\tau 
	\label{eqn3}  ,
\end{align}
respectively. Taking Laplace transform on both sides of \eqref{eqn1}, \eqref{eqn2} and \eqref{eqn3} gives the following set of equations: 
\begin{align}
	\lhpone(s,r_1) -\lhp{1}(s)  =  
	s\lhp{2}(s) \lhpone(s,R_{21}) +
	s\lhp{3}(s)\lhpone(s,R_{31}),\label{eqn4}\\
%\end{align}
%\begin{align}
	\lhpone(s,r_2) -\lhp{2}(s)   =  
	s\lhp{1}(s) \lhpone(s,R_{12}) 
	+ s \lhp{3}(s) \lhpone(s,R_{32}),\label{eqn5}\\
%\end{align}
%and
%\begin{align}
	%\text{and }
	\lhpone(s,r_3) - \lhp{3}(s)   =  
	s\lhp{1}(s)\lhpone(s,R_{13})
	+ s\lhp{2}(s) \lhpone(s,R_{23}).	\label{eqn6}
\end{align}
Here $\lhp{i}(s)$ and $\lhpone(s,x),\ x\in\{r_i,R_{ij}\}\ i,j\in\{1,2,3\}, i\neq j
$ are the Laplace Transforms of $\hp{i}(t)$ and $\hpone(t,x)$ respectively. Note that,
$ \lhpone(s,x)=\frac{a}{sx}\exp\left(-\left(x-a\right)\sqrt{\frac{s}{D}}\right).\label{lapeq} $
	%\lhpone(s,x)=\mathcal{L}\left(\frac{a*\mathsf{erfc}(\frac{x-a}{\sqrt{4*D*t}}){x}\right).

Solving \eqref{eqn4}, \eqref{eqn5} and \eqref{eqn6} for the value of $\lhp{1}(s)$ and taking the inverse Laplace transform gives \eqref{eq3rx}.
%the Laplace transform of hitting probabilities as
%\begin{align}
%	\lhp{}(s,r_i{\mid} \x_{1:N})   =
%	\frac{\lhpone(s,r_i) -s\mathsf{a}_i(s)+s^2\mathsf{b}_i(s)}
%	{1-s^2\mathsf{c}(s)+s^3\mathsf{d}(s)}, \label{eqn7}
%\end{align}
%where $\mathsf{a}_i(s){=}\lhpone(s,r_j)\lhpone(s,R_{ji}){+}\lhpone(s,r_k)\lhpone(s,R_{ki}) ,\ \mathsf{b}_i(s){=}-\lhpone(s,r_i)\lhpone(s,R_{jk})\lhpone(s,R_{kj}) {+}\lhpone(s,r_j)\lhpone(s,R_{jk})\lhpone(s,R_{ki})+\lhpone(s,r_k)\lhpone(s,R_{kj})\lhpone(s,R_{ji}),\ \mathsf{c}(s){=}\lhpone(s,R_{ij})\lhpone(s,R_{ji}) + \lhpone(s,R_{kj})\lhpone(s,R_{jk}){+}\lhpone(s,R_{ik}) \lhpone(s,R_{ki}),\  \mathsf{d}(s){=}\lhpone(s,R_{ij})\\\lhpone(s,R_{jk})\lhpone(s,R_{ki}) {+} \lhpone(s,R_{ik})\lhpone(s,R_{kj})\lhpone(s,R_{ji}) $, and $ i,j,k\in\{1,2,3\} $.
%
%Now, using the identity $ 1/(1-x)=\sum_{n=0}^{\infty}x^n $ in \eqref{eqn7} and taking the inverse Laplace transform of  gives Theorem \ref{t3rx}.
\section{Derivation of Theorem \ref{c3rxeq}}\label{a3rxeq}
 When three FARs that are equidistant from the transmitter at the origin and equidistant from each other, $ r_i=r,\ \forall i  $ and $ R_{ij}=R,\ \forall i,j,\  i\neq j $. Therefore \eqref{lp3rx} simplifies to
\begin{align}
	\lhp{1}(s)   =
	\frac{\lhpone(s,r) -s\mathsf{a}(s)+s^2\mathsf{b}(s)}
	{1-s^2\mathsf{c}(s)+s^3\mathsf{d}(s)}, \label{abeq19}
\end{align}\\
where $\mathsf{a}(s){=}2\lhpone(s,r)\lhpone(s,R) ,\ \mathsf{b}(s){=}\lhpone(s,r)\lhpone^2(s,R),\ \mathsf{c}(s)=3\lhpone^2(s,R),\  \mathsf{d}(s){=}2\lhpone^3(s,R) $.
Simplifying \eqref{abeq19} further gives
\begin{align}
	\lhp{1}(s)  &
	=\lhpone(s,r)\frac{1}{1+2s\lhpone(s,R)}\label{l3rx}\\
	&\stackrel{(a)}{=}\lhpone(s,r)\times\sum_{n=0}^{\infty}(-2)^n\left(s\lhpone(s,R)\right)^n,
	\label{abeq20}
\end{align}
where  $ (a) $ is due to the identity $ (1+2x)^{-1}=\sum_{n=0}^{\infty}(-2)^nx^n $.
% Now substituting \eqref{lapeq} in \eqref{abeq20} gives
%\begin{align}
%	\lhp{i}(s)=\lhpone(s,r)\times\sum_{n=0}^{\infty}(-2)^n\left(s\lhpone(s,R)\right)^n.\label{abeq21n}
%\end{align}
Substituting \eqref{lapeq} in \eqref{abeq20} gives
{\small\begin{align}
	\lhp{1}(s)=\frac{a}{r}\sum_{n=0}^{\infty}\frac{(-2a)^n}{sR^n}\exp\left(-(r-a+n(R-a)\sqrt{\frac{s}{D}})\right).\label{eqcor}
\end{align}}
Applying the inverse Laplace transform on both sides of \eqref{eqcor} gives Theorem \ref{c3rxeq}.

	\bibliographystyle{IEEEtran}
	\bibliography{nrx_ref}
\end{document}